\newtheorem{thm}{Theorem}
\def\f{\mathbf f}
\def\H{\mathbf H}
\def\M{\mathbf M}
\def\X{\mathcal X}
\def\1{\mathbf 1}
\def\0{\mathbf 0}
\def\u{\mathbf u}
\def\f{\mathbf f}
\def\NN{\mathbb N}
\def\RR{\mathbb R}
\newcommand{\rank}{\mathrm{rank}}
\title{Optimal Exact Designs of Multiresponse Experiments\\under Linear and Sparsity Constraints}
\author[1]{Lenka Filov\'a}
\author[1]{P\'al Somogyi}
\author[1]{Radoslav Harman}
\affil[1]{Faculty of Mathematics, Physics and Informatics\\
Comenius University, Bratislava, Slovakia}
\date{}
\begin{document}

\maketitle

\begin{abstract}
We propose a computational approach to constructing exact designs on finite design spaces that are optimal for multiresponse regression experiments under a combination of the standard linear and specific 'sparsity' constraints. The linear constraints address, for example, limits on multiple resource consumption and the problem of optimal design augmentation, while the sparsity constraints control the set of distinct trial conditions utilized by the design. The key idea is to construct an artificial optimal design problem that can be solved using any existing mathematical programming technique for univariate-response optimal designs under pure linear constraints. The solution to this artificial problem can then be directly converted into an optimal design for the primary multivariate-response setting with combined linear and sparsity constraints. We demonstrate the utility and flexibility of the approach through dose-response experiments with constraints on safety, efficacy, and cost, where cost also depends on the number of distinct doses used.
\end{abstract}

{\bf Keywords:} Exact Optimal Design, Multiresponse Experiments, Linear and Sparsity Constraints, Mathematical Programming, Dose-Response Studies.
\\

{\bf Funding:} The research was supported by the Collegium Talentum Programme of Hungary and the Slovak Scientific Grant Agency (grant VEGA 1/0480/26).

\section{Introduction}\label{sec:introduction}

Optimal design of statistical experiments lies at the intersection of optimization and statistics, focusing on developing efficient data acquisition methods (e.g., \cite{Pazman}, \cite{Puk}, \cite{atkinson}, \cite{FedorovHackl}, \cite{jesus}). It aims to select an experiment that, under various constraints, minimizes the uncertainty in parameter estimates of an underlying statistical model, leading to more precise and reliable inference. Optimal experimental design has found applications in industry, clinical research, and many other fields (e.g., \cite{GJ}, \cite{BW}, \cite{Ting}).

We view an experiment as a collection of individual 'runs' or 'trials', each resulting in an observation. Let $w$ be an object representing the experiment, and let $\Xi$ be the set of all permissible designs. On $\Xi$, define a real-valued function $\phi$, called the 'optimality criterion', which serves as a measure of design quality. Note that for a design $w$, the value $\phi(w)$ typically also depends on an underlying statistical model that describes the assumed stochastic distribution of observations under $w$. Then, $w^*$ is an optimal design if it maximizes $\phi$ over $\Xi$, that is, $w^* \in \mathrm{argmax}\{\phi(w) : w \in \Xi\}$.

Each feasible design $w \in \Xi$ is determined by a 'design space' $\X$, also referred to as an 'experimental domain'. The elements of $\X$, called 'candidate design points', represent the conditions under which individual trials can be performed. These are often the levels of a design factor or all combinations of the levels of several design factors.

Extensive literature on optimal experimental design focuses on so-called 'approximate' designs,\footnote{Approximate designs are sometimes also called 'continuous' designs or 'infinite-sample' designs.} formally finitely-supported measures, often defined on continuous design spaces $\X$. The approximate designs prescribe the real-valued proportions of trials performed at individual design points. While approximate designs offer unquestionable theoretical and computational advantages, the ultimate goal in practical experimental design is to construct an 'exact' design, which determines the integer replication numbers at each design point (see, e.g., \cite{Puk}, Chapter 12, \cite{atkinson}, Section 9.1, \cite{jesus}, Section 2.2).

Similarly, a continuous design space is typically only an idealized approximation of reality, as it assumes that trial conditions, such as factor levels, can be selected with infinite precision; in practice, trial conditions are chosen from a design space $\X$ with a finite, albeit possibly large, number $n$ of candidate points. Importantly, in many applications, a finite design space is not only practical but also the most natural, or even the only meaningful, choice for defining the experimental domain. This applies, for instance, to screening designs, block designs, weighing designs, designs for choice experiments, graph and network problems, and cases where optimal designs are used for a sub-selection from a finite database (e.g., \cite{RobCook}, \cite{screening}, \cite{block}, \cite{weighing}, \cite{choice}, \cite{networks}, \cite{iboss}, \cite{subsamp}).

\textbf{Example.} Consider a dose-response experiment in a clinical trial, where the design space is the set of all feasible doses that can be administered to patients (or volunteers).  In this setting, each dose must be assigned to an integer number of patients, making direct implementation of the real-valued proportions prescribed by an approximate design infeasible. Additionally, the realistic design space is finite, consisting of a predetermined set of doses selected based on prior clinical practice and the precision with which the dose can be measured (see, e.g., \cite{Ting}, Ch. 7.3.4.).

This paper therefore focuses on exact designs.\footnote{Exact designs are also called 'integer' designs or 'finite-sample' designs.} Each exact design is formalized as a function $w : \X \to \{0,1,2,\ldots\}$ on a finite design space $\X = \{x_1, \ldots, x_n\}$ , where $w(x)$ directly determines the number of replicated trials at the candidate point $x \in \X$. Note that in most practical experiments $w(x)>0$ for only a small fraction of candidate points, and the set of all such points is called the support of $w$.

Furthermore, in many modern experimental settings, observed responses are multivariate, and acceptable designs must satisfy multiple constraints. These constraints may stem from realistic restrictions, such as finite experimental resources, safety requirements, and the need for balance or fairness, resulting in a complex set $\Xi$ of permissible designs.

\textbf{Example.} Consider a dose-response experiment aimed at finding an optimal exact design $w^*$ for a bivariate dose-response model. The design must be constrained to a total of $75$ patients, with no more than $10$ expected toxic or non-efficacious responses. Additionally, the design should include $5$ different dose levels, each administered to at least $10$ and at most $20$ patients. Importantly, the optimization process must itself select the $5$ different dose levels from the full design space $\X$ to maximize the criterion value. 

In scenarios involving complex models and multiple design constraints, analytic solutions for optimal designs are rarely available, necessitating numerical methods. We therefore develop a computational approach for such problems.

More precisely, the main contribution of this paper is to show that any of the available mathematical programming (MP) procedures capable of computing optimal exact designs for univariate-response models under linear constraints can be extended to compute optimal exact designs for multivariate-response models, which can in addition be restricted by a flexible hybrid class of constraints. These constraints combine both linear and sparsity restrictions on the design, such as those in the example above. 
This extension significantly broadens the applicability of various existing MP-based procedures (e.g., \cite{aqua}, \cite{aqua1}, \cite{milp}, \cite{SagnolHarman}, \cite{rlib}).

The key technical idea is to transform the primary optimal design problem (a multivariate-response optimal design problem under linear and sparsity constraints) to an auxiliary optimal design problem that can be solved using existing MP procedures. This auxiliary problem is straightforward to set up, and its size, in terms of the number of variables and constraints, is only moderately larger than that of the original problem. The optimal design of the auxiliary problem can then be used to directly obtain an optimal design for the primary problem.

This paper is structured as follows. Sections \ref{sec:constraints} and \ref{sec:problem} introduce the specific variants of the optimal design problem addressed in this work. We begin by proposing a comprehensive classification of the various linear constraints that have been studied in the context of optimal experimental design. We then discuss novel linear and sparsity (LAS) constraints on the set of exact designs over a finite design space, as well as statistical models with multivariate responses. In the key Section \ref{sec:computation} we demonstrate how the problem of LAS constraints for multivariate-response models can be converted to the case that can be solved by existing mathematical programming methods. Finally, in Section \ref{sec:examples}, we present examples of how the proposed method can be used to construct statistically efficient and directly implementable experiments for dose-response studies, sequentially incorporating various realistic conditions related to efficacy, safety, and costs. We emphasize that the same methods can be applied to many other optimal experimental design problems with similar types of constraints, such as screening factorial experiments (e.g., \cite{woods}), block designs for comparative trials (\cite{bailey}), optimal designs for accelerated lifetime testing (such as \cite{alt}), and beyond.

\section{Experimental design constraints}\label{sec:constraints}

The experimental design is sometimes restricted to only a subset of a ``full'' design space, for instance, when some (combinations of) factor levels are not allowed. However, such domain constraints pose no major issue if we consider a finite design space $\X$. Nontrivial experimental constraints that we deal with in this section are conditions on the set $\Xi$ itself, i.e., on the set of functions $w:\X \to \{0,1,2,\ldots\}$ representing permissible designs.

The most prevalent restriction in optimal experimental design is the standard constraint on the experimental size, formally $\sum_{i=1}^n w(x_i)=N$. That is, $N$ is a given required number of trials conducted as part of the experiment. The size constraint is often the only constraint imposed on the design. However, other types of restrictions can lend the experimenters substantially broader control over the experiment. An important class of restrictions can be formalized as $K$ general linear constraints of the form
\begin{equation}\label{eq:XiLin}
    \sum_{i=1}^n a(x_i,k) w(x_i) \leq b(k),
\end{equation}
for all $k \in \{1:K\}:=\{1,\ldots,K\}$, where $a:\X \times \{1:K\} \to \RR$ and $b:\{1:K\} \to \RR$ are given functions. We denote the set of all designs $w$ satisfying \eqref{eq:XiLin} by $\Xi(\X,a,b)$.

In this paper we do not require that the constraints \eqref{eq:XiLin} involve the size constraint, i.e., the optimal size of the design may also be subject to the optimization process. Nevertheless, in applications the constraints \eqref{eq:XiLin} always make the set $\Xi(\X,a,b)$ bounded, because each trial consumes some kind of resources that are ultimately limited.

 For clarity, we suggest classifying the various linear constraints of the form \eqref{eq:XiLin} that have appeared in the literature into three families, based on their role. A constraint with index $k \in \{1:K\}$ can be: 
\begin{enumerate}
    \item \textbf{An 'exclusion', or 'resource', constraint}, which generally serves to limit trials. Such a constraint is characterized by inequalities $a(x,k) \geq 0$ for all $x \in \X$ and $b(k)>0$ (see \cite{HBF}). A special case is a 'privacy' constraint (\cite{Benkova2016}), defined on a privacy region $\tilde{\X} \subseteq \X$, where $a(x,k) = 1$ if $x \in \tilde{\X}$, and $a(x,k) = 0$ if $x \notin \tilde{\X}$. If $\tilde{\X}$ is a singleton, i.e., $\tilde{\X}=\{x\}$, we obtain a ``direct'' constraint, limiting the number of replications at a specific point $x$ (e.g., \cite{Fedorov82}). A system of privacy constraints over a partition of $\X$ defines strata constraints (\cite{Harman2014JSPI}). In factor experiments, privacy regions may correspond to sets of points sharing a level of a factor, resulting in marginal constraints (e.g., \cite{CT}). Different  structure-enforcing privacy constraints also include Latin-hypercube constraints (\cite{McKay}) and, more generally, bridge constraints (\cite{Jones2014}). In Section \ref{sec:examples}, we show that the exclusion constraints, which embody toxicity constraints, can limit toxic outcomes in the context of dose-response studies.

    \item \textbf{An 'inclusion' constraint}, which enforces the presence of trials at certain points or regions in the design space. In our formalization, an inclusion-type constraint corresponds to a linear inequality with $a(x,k) \leq 0$ for all $x \in \X$ and $b(k)<0$. The most important class of inclusion-type constraints are those that require minimum number of replications at design points, i.e., the constraint of the type $w(x_i) \geq N(x_i)$, where $N(x_i)$ is a given lower replication limit, $i \in \{1:n\}$. These constraints appear in the literature under various names, for instance 'protected runs' (Section 11.7 in \cite{Puk}) or 'augmentation' of trials (cf. Chapter 19 in \cite{atkinson}). However, inclusion constraints can also enforce lower bounds on specific design features, i.e., minimum replication numbers within entire subregions $\tilde{\X} \subseteq \X$ (e.g., Example 6.3 in \cite{aqua}). In Section \ref{sec:examples} we show that the inclusion constraints can represent minimum efficacy requirements in dose-response studies.

    \item \textbf{A 'mixed' constraint}, which combines positive and negative weights in $a(x,k)$, that is, $a(x_1,k) > 0$ and $a(x_2,k) < 0$ for some $x_1, x_2 \in \X$. These constraints typically encode balance or comparative weighting between subgroups of design points. For example, suppose $\X$ includes covariates indicating patient gender, with $\tilde{\X}_F$ and $\tilde{\X}_M$ denoting the subsets corresponding to females and males, respectively. To enforce equal allocation across genders, one can impose the constraint $\sum_{i: x_i \in \tilde{\X}_F} w(x_i) - \sum_{i: x_i \in \tilde{\X}_M} w(x_i) = 0$.
\end{enumerate}

Importantly, there exist constraints of potentially high practical utility other than \eqref{eq:XiLin}, such as constraints on the number and distribution of points at which measurements will be taken, without restricting the design space itself, which we refer to as sparsity constraints. Formally, these are constraints involving properties of the support set of the design $w$. One of the main new contributions of this paper is to show how existing computational methods can be used in situations involving both linear and sparsity constraints.

Formally, by linear and sparsity (LAS) constraints for designs of size $N$ we understand
\begin{eqnarray}\label{eq:XiLas}
    \sum_{i=1}^n a(x_i,k)w(x_i) + \sum_{i=1}^n c(x_i,k)s_w(x_i) &\leq& b(k) \: \text{ for } \: k \in \{1:K\},\\
    \sum_{i=1}^n w(x_i) &=& N, \nonumber
\end{eqnarray}
where $a,c : \X \times \{1:K\} \to \RR$, and $b:\{1:K\} \to \RR$ are given coefficient functions and $s_w=\mathrm{sgn}(w)$, i.e., $s_w$ is the indicator of the support of $w$: $s_w \in \{0,1\}$ and $s_w(x)=1 \Leftrightarrow w(x)>0$. We will denote the set of all designs $w$ satisfying constraints \eqref{eq:XiLas} by the symbol $\Xi_N(\X,a,c,b)$.

The LAS constraints \eqref{eq:XiLas} allow the experimenter to require special additional properties of the design that cannot be directly written as linear constraints, for instance:

\begin{itemize}
        \item \textbf{Maximum support size.} Let $K=1$, $a(x,1) \equiv 0$, $c(x,1) \equiv 1$ and $b(1) = S \in \NN$, $S \leq N$. Then $\Xi_N(\X,a,c,b)$ is the set of all designs of size $N$ with the support size not exceeding $S$. That is, the permissible experiments use exactly $N$ trials but at most $S$ different trial conditions, for example at most $S$ different doses in a dose-response experiment.
        \item \textbf{Support-influenced cost.} Let $K=1$, $a(x,1) \geq 0$ represent costs for each trial at $x \in \X$, next $c(x,1) \geq 0$ represent overhead costs for each new support point and $b(1)=B>0$ is a budget limit. Then $\Xi_N(\X,a,c,b)$ is the set of all designs of size $N$ fitting to the experimental budget $B$.
        \item \textbf{Support points separation.} Let $\X_1,\ldots,\X_K \subseteq \X$, $a(x,k) \equiv 0$, $c(x,k) \equiv 1$ for all $x \in \X_k$, $c(x,k) \equiv 0$ for all $x \notin \X_k$ and $b(k) \equiv 1$. Then the support set of any design in $\Xi_N(\X,a,c,b)$ has at most one element in any of the subsets $\X_1,\ldots,\X_K$. By an appropriate choice of these sets one can achieve desired ``space-filling'' separation of the support points. 
        \item \textbf{Limits of replications at support points.} Let $K=2n$, and $a,c,b$ are chosen to represent $w(x_i) \geq L(x_i)s_w(x_i)$ and $w(x_i) \leq U(x_i)s_w(x_i)$ for all $i \in \{1:n\}$, where $L(x)$ and $U(x)$ are given potential replication limits for each $x \in \X$. Then a design $w$ belongs to $\Xi_N(\X,a,c,b)$ if and only if $w$ is a size-$N$ design that replicates observations in each design point $x \in \X$ either $0$ times, or between $L(x)$ and $U(x)$ times.
\end{itemize}

\section{The optimal design problem}\label{sec:problem}

Once the set $\Xi$ of permissible experimental designs is defined, we introduce a real-valued optimality criterion $\phi$ on $\Xi$, the maximization of which allows us to identify the best design. We focus on model-oriented optimal design, where the goal is to estimate parameters of a statistical model or a function thereof.

In this framework, the criterion $\phi$ is typically defined as $\phi(w)=\Phi(\M(w))$, where $\Phi$ is a scalar function and $\M(w)$ is the so-called information matrix of the design $w$. The matrix $\M(w)$ equals or approximates (a fixed multiple of) the Fisher information matrix for the unknown parameter vector $\theta$ in the statistical model. In the case of independent responses across trials, the information matrix typically takes the form $\M(w) = \sum_{i=1}^n w(x_i)\H(x_i)$, where $\H(x_1),\ldots,\H(x_n)$ are known positive semidefinite 'elementary' information matrices.\footnote{The elementary information matrices are sometimes also called 'elemental' information matrices or 'one-point' information matrices.} The elementary information matrices are based on the statistical model used.

In the univariate-response models, the elementary information matrices have rank one, i.e., $\H(x_i)=\f(x_i)\f^\top(x_i)$, where $\f(x_1),\ldots,\f(x_n) \in \mathbb{R}^m$ are known $m$-dimensional vectors. For instance, consider the univariate-response nonlinear regression model $Y(x) = \eta(x,\theta) + \epsilon$, $\epsilon \sim \mathcal{N}(0, \sigma^2)$, where $\eta:\X \times \mathbb{R}^m \to \mathbb{R}$ is a smooth function and $\theta \in \mathbb{R}^m$ and $\sigma>0$ are parameters. Then the elementary information matrices can be defined using the local optimality principle and asymptotic approximations at $\theta_0$ (\cite{chernoff}) as $\H(x)=\f(x)\f^\top(x)$, where $\f(x)$ is the gradient of $\eta(x, \theta):\mathbb{R}^m \to \mathbb{R}$ with respect to $\theta$ evaluated at $\theta_0$.

The optimality criterion $\Phi$ measures the size of positive semidefinite matrices. The most popular criterion is $D$-optimality, which in its information-function form (\cite{Puk}) can be defined as: $\Phi_D(\M) = (\det(\M))^{1/m}$ for any positive semidefinite $\M$. However, our approach also immediately applies to a wide class of other criteria, such as $D_K$-, $A_K$-, $c$-, $I$-, $G$-, $MV$-optimality, and convex combinations of these criteria, for which mathematical programming-based tools are available.

The optimal exact design problem that is most commonly considered is that of univariate-response models with linear constraints on the designs, formally
\begin{equation}\label{eq:Lproblem}
w^* \in \mathrm{argmax}\left\{ \Phi\left(\sum_{i=1}^n w(x_i)\f(x_i)\f^\top(x_i)\right) \: : \: w \in \Xi(\X,a,b) \right\}.  
\end{equation}

Computational methods for handling various linear constraints have been extensively studied in the context of approximate design theory. In that setting, convex optimization techniques are available, see, e.g., \cite{CookFedorov}, \cite{CookWong}, \cite{Harman2014JSPI}, \cite{HarmanBenkova2017}, \cite{sdp}, \cite{sagnol} and others. In contrast, relatively few methods target the \emph{exact} optimal design problem \eqref{eq:Lproblem} under nontrivial linear constraints. These include heuristic methods tailored to specific constraint families (e.g., resource constraints in \cite{HBF} or privacy constraints in \cite{Benkova2016}), and, most often, mathematical-programming based methodologies (\cite{DGW}, \cite{duarte}, \cite{SagnolHarman}, \cite{aqua1}, \cite{aqua}, \cite{milp}).

While most optimal design literature focuses on optimal experimental design for univariate-response models, many real-world experiments involve multivariate outcomes of each individual trial. Multiresponse models arise in various applications, including chemical engineering, clinical trials, and drug development. One of the first references to optimal design for such models is \cite{DraperHunter}, where augmented $D$-optimality is proposed. Next, \cite{Fed71} provided theoretical foundations, including an equivalence theorem for linear multiresponse models. Approximate optimal designs in multivariate models have been studied, e.g., in \cite{dragalin}, where a gradient descent method is applied to the bivariate binary Cox model, and in \cite{dfw}, where a bivariate probit model is used in a similar setting.

The primary problem addressed in this paper is therefore to find the exact optimal design for the multivariate-response setting under the versatile linear and sparsity (LAS) constraints introduced earlier. Formally, we aim to solve
\begin{equation}\label{eq:LASproblem}
w^* \in \mathrm{argmax}\left\{ \Phi\left(\sum_{i=1}^n w(x_i)\H(x_i)\right) \: : \: w \in \Xi_N(\X,a,c,b) \right\},    
\end{equation}
where $\H(x_1),\ldots,\H(x_n)$ are general-rank positive semidefinite $m \times m$ matrices and $\Xi_N(\X,a,c,b)$ is the set of all LAS-constrained designs of size $N$. In the next section, we show that \eqref{eq:LASproblem} can be solved by leveraging standard mathematical programming tools developed for solving \eqref{eq:Lproblem}.

\section{Computing multiresponse LAS-constrained optimal designs}\label{sec:computation}

The key strategy to solving the primary problem \eqref{eq:LASproblem} is to transform it to a problem of the form \eqref{eq:Lproblem} defined by auxiliary regressors $\f'$, design space $\X'$ and linear constraint functions $a',b'$.

\begin{thm}\label{thm:main}
Consider the primary optimal design problem \eqref{eq:LASproblem} for a multivariate-response model on a design space $\X$ of size $n$ defined by the elementary information matrices $\H(x)$, $x\in \X$, with the linear and sparsity constraints $\Xi_N(\X,a,c,b)$. 
Let $r \in \NN$ be such that $\rank(\H(x_i)) \leq r$ for all $i \in \{1:n\}$. 
Let $w'^*$ be a solution of the following auxiliary univariate-response model of the type \eqref{eq:Lproblem} on a design space $\X'$ of size $n'=nr+n$ defined by regressors $\f'$ and linear constraints $\Xi(\X',a',b')$:
\begin{itemize}
    \item $\X'=\big(\X \times \{1:r\}\big) \cup \mathcal{Z}$, where $\mathcal{Z}=\{z_1,\ldots,z_n\}$ is any $n$-point set of independent labels. 
    
    \item For all $i\in \{1:n\}$ the regressors $\f'(x_i,1),\ldots,\f'(x_i,r)$ satisfy $\H(x_i)=\sum_{j=1}^r \f'(x_i,j)\f'^\top(x_i,j)$.

    \item For all $i\in \{1:n\}$ the regressors $\f'(z_i)$ are the $m$-dimensional zero vectors.
 
    \item The values $a'((x_i,j),k)$, $a'(z_i,k)$, $b'(k)$ for $i \in \{1:n\}$, $j \in \{1:r\}$ and $k \in \{1:K'\}$ are chosen such that they correspond to the following restrictions on the auxiliary exact designs $w'$ on $\X'$:
    \begin{eqnarray}
        && w'(x_i,1)=\cdots=w'(x_i,r) \text{ for all } i \in \{1:n\},\label{eq:1}\\
        && w'(z_i) \leq 1 \text{ for all } i\in \{1:n\},\label{eq:2}\\
        && w'(z_i) \leq w'(x_i,1) \leq N w'(z_i) \text{ for all } i\in \{1:n\},\label{eq:3}\\
        && \textstyle \sum_{i=1}^n a(x_i,k) w'(x_i,1) + \sum_{i=1}^n c(x_i,k)w'(z_i) \leq b(k) \text{ for all } k\in \{1:K\},\\
        && \textstyle \sum_{i=1}^n w'(x_i,1) = N.
    \end{eqnarray}
\end{itemize}
Then $w^*(\cdot) = w'^*(\cdot,1)$ is the optimal solution to the primary optimal design problem \eqref{eq:LASproblem}.
\end{thm}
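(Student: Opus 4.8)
The plan is to exhibit an explicit value-preserving bijection between the feasible set $\Xi(\X',a',b')$ of the auxiliary problem and the feasible set $\Xi_N(\X,a,c,b)$ of the primary problem; the claim then follows at once by comparing the two $\argmax$ sets. Given an auxiliary design $w'$, I would associate the design $w := w'(\cdot,1)$ on $\X$, i.e.\ $w(x_i) := w'(x_i,1)$; conversely, given $w \in \Xi_N(\X,a,c,b)$, I would define $w'$ on $\X'$ by $w'(x_i,j) := w(x_i)$ for all $j \in \{1:r\}$ and $w'(z_i) := s_w(x_i)$.

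First I would establish the criterion identity. For any auxiliary design $w'$ satisfying \eqref{eq:1}, write $v_i := w'(x_i,1)$; then \eqref{eq:1} gives $w'(x_i,j) = v_i$ for all $j$, and since the auxiliary regressors at the labels $z_i$ are zero vectors,
\begin{align*}
\M(w') &= \sum_{i=1}^n \sum_{j=1}^r w'(x_i,j)\,\f'(x_i,j)\f'^\top(x_i,j) = \sum_{i=1}^n v_i \sum_{j=1}^r \f'(x_i,j)\f'^\top(x_i,j) \\
&= \sum_{i=1}^n v_i\,\H(x_i) = \M(w),
\end{align*}
using the prescribed factorization $\H(x_i) = \sum_{j=1}^r \f'(x_i,j)\f'^\top(x_i,j)$. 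Here I would note that regressors $\f'(x_i,\cdot)$ with this property exist precisely because $\H(x_i)$ is positive semidefinite of rank at most $r$ (take a spectral decomposition and, if needed, pad it with zero vectors). Consequently $\Phi(\M(w')) = \Phi(\M(w))$ whenever \eqref{eq:1} holds, and in particular along the correspondence above.

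The core step is the constraint equivalence, and the crucial observation --- the only place where integrality and nonnegativity of exact designs enter --- is that \eqref{eq:2} and \eqref{eq:3} force $w'(z_i)$ to equal the support indicator of $w$: if $v_i = 0$ then $0 \le w'(z_i) \le v_i = 0$, so $w'(z_i) = 0$; if $v_i \ge 1$ then $v_i \le N\,w'(z_i)$ forces $w'(z_i) \ge 1$, hence $w'(z_i) = 1$ by \eqref{eq:2}. Thus $w'(z_i) = s_w(x_i)$ for every $i$, so the last two restrictions of the theorem read $\sum_i a(x_i,k)w(x_i) + \sum_i c(x_i,k)s_w(x_i) \le b(k)$ and $\sum_i w(x_i) = N$, i.e.\ exactly \eqref{eq:XiLas}; hence $w \in \Xi_N(\X,a,c,b)$. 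Conversely, for $w \in \Xi_N(\X,a,c,b)$ and $w'$ defined as above, \eqref{eq:1} and \eqref{eq:2} are immediate, \eqref{eq:3} reduces to $s_w(x_i) \le w(x_i) \le N\,s_w(x_i)$, which holds because $w(x_i) \le \sum_j w(x_j) = N$ when $w(x_i) \ge 1$ (and both sides vanish when $w(x_i) = 0$), and the two translated constraints hold by \eqref{eq:XiLas}; hence $w' \in \Xi(\X',a',b')$. Since a feasible $w'$ is completely determined by $w = w'(\cdot,1)$ --- its other values being fixed by \eqref{eq:1} and \eqref{eq:3} --- the two maps are mutually inverse, so $w' \mapsto w'(\cdot,1)$ is a bijection from $\Xi(\X',a',b')$ onto $\Xi_N(\X,a,c,b)$.

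Finally I would note that all of \eqref{eq:1}--\eqref{eq:3} together with the two translated constraints are linear inequalities in $w'$ (the equalities being written as pairs of opposite inequalities), so they do have the form \eqref{eq:XiLin} for a suitable finite $K'$ and coefficient functions $a', b'$; thus the auxiliary problem is indeed a legitimate instance of \eqref{eq:Lproblem}. Combining the bijection with the criterion identity, the optimal values of the two problems coincide and $w'^*$ solves the auxiliary problem if and only if $w'^*(\cdot,1)$ solves \eqref{eq:LASproblem}, which is precisely the assertion. I expect the only genuinely delicate point to be the two-directional verification that the ``sandwich'' inequalities \eqref{eq:3}, read over nonnegative integers, pin $w'(z_i)$ down to $s_w(x_i)$; everything else is bookkeeping, together with the standard fact that a positive semidefinite matrix of rank at most $r$ decomposes into a sum of $r$ rank-one terms.
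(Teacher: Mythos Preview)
Your proof is correct and follows essentially the same approach as the paper: both establish a bijection $w' \mapsto w'(\cdot,1)$ between the auxiliary and primary feasible sets together with an information-matrix identity, and both hinge on the observation that \eqref{eq:2}--\eqref{eq:3} force $w'(z_i)=s_w(x_i)$. Your write-up is in fact slightly more explicit than the paper's in verifying the inequality $w(x_i)\le N s_w(x_i)$ via $\sum_j w(x_j)=N$, in noting that the rank-$r$ decomposition exists, and in remarking that the auxiliary constraints are indeed of the linear form \eqref{eq:XiLin}.
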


\begin{proof}
    For each exact design $w'$ on $\X'$ let $\kappa(w')$ be an exact design on $\X$ defined by  $\kappa(w')(x_i)=w'(x_i,1)$, $i \in \{1:n\}$. To prove the theorem, it is enough to show that
    \begin{itemize}
        \item[(a)] $\kappa$ is a bijective mapping from $\Xi(\X',a',b')$ to $\Xi_N(\X,a,c,b)$, i.e., a design $w' \in \Xi(\X',a',b')$ is in one-to-one correspondence with the design $\kappa(w') \in \Xi_N(\X,a,c,b)$, and
        \item[(b)] the objective function of the auxiliary problem  evaluated in any $w' \in \Xi(\X',a',b')$ coincides with the objective function of the primary problem evaluated in  $\kappa(w') \in \Xi_N(\X,a,c,b)$, that is,
        \begin{equation*}
        \Phi\left(\sum_{i'=1}^{n'} w'(x'_{i'})\f'(x'_{i'})\f'^\top(x'_{i'})\right) = \Phi\left(\sum_{i=1}^n \kappa(w')(x_i)\H(x_i)\right).
        \end{equation*}
    \end{itemize} 
    To show that $\kappa$ is injective, assume that $w'_1, w'_2 \in \Xi(\X',a',b')$ satisfy $\kappa(w'_1)=\kappa(w'_2)$, that is, $w'_1(x_i, 1)=w'_2(x_i, 1)$ for all $i \in \{1:n\}$. Then, clearly, condition \eqref{eq:1} implies that $w'_1$ and $w'_2$ coincide on $\X \times \{1:r\}$. However, conditions \eqref{eq:2} and \eqref{eq:3} imply that $w'_1$ and $w'_2$ also coincide on $\mathcal{Z}$. Indeed, for any $i \in \{1:n\}$ we have 
    \begin{equation*}
        w'_1(z_i)=\mathrm{sgn}(w'_1(z_i))=\mathrm{sgn}(w'_1(x_i,1))=\mathrm{sgn}(w'_2(x_i,1))=\mathrm{sgn}(w'_2(z_i))=w'_2(z_i),
    \end{equation*}
    where the first and the fifth inequalities follow from \eqref{eq:2}, the second and the fourth equalities follow from \eqref{eq:3} and the third inequality from the assumption $\kappa(w'_1)=\kappa(w'_2)$. Hence, $\kappa$ is injective. To prove that $\kappa$ is surjective, take any $w \in \Xi_N(\X,a,c,b)$; we need to show that $w=\kappa(w')$ for some $w' \in \Xi(\X',a',b')$. But, as can be easily verified, such a $w'$ can be defined as follows: $w'(x_i,1)=\cdots=w'(x_i,r):=w(x_i)$ and $w'(z_i):=\mathrm{sgn}(w(x_i))$ for all $i \in \{1:n\}$. We proved (a).

    Statement (b) follows from
    \begin{eqnarray*}
        \sum_{i'=1}^{n'} w'(x'_{i'})\f'(x'_{i'})\f'^\top(x'_{i'}) &=& \sum_{i=1}^{n} \sum_{j=1}^r w'(x_i,j)\f'(x_i,j)\f'^\top(x_i,j) + \sum_{i=1}^{n}  w'(z_i)\f'(z_i)\f'^\top(z_i) \\
        &=&  \sum_{i=1}^{n} w'(x_i,1) \left(\sum_{j=1}^r \f'(x_i,j)\f'^\top(x_i,j) \right) + \mathbf{0}_{m \times m} \\
        &=& \sum_{i=1}^n \kappa(w')(x_i)\H(x_i).
    \end{eqnarray*}
\end{proof}

The auxiliary regressors $\f'(x_i,j)$ in Theorem \ref{thm:main} can be constructed, for instance, as follows. For each $i \in \{1:n\}$ and the elementary information matrix $\H(x_i)$ let $\lambda_{i,1} \geq \ldots \geq \lambda_{i,m}$ be the eigenvalues, repeated according to their multiplicities, and let $\u_{i,1}, \ldots, \u_{i,m}$ be the corresponding orthogonal system of eigenvectors. As $\rank(\H(x_i)) \leq r$ we have $\lambda_{i,r+1}=\cdots =\lambda_{i,m}=0$, that is
\begin{equation*}
\H(x_i)=\sum_{j=1}^m \lambda_{i,j} \u_{i,j} \u_{i,j}^\top=
	\sum_{j=1}^r(\sqrt{\lambda_{i,j}}\u_{i,j})(\sqrt{\lambda_{i,j}}\u_{i,j})^\top.    
\end{equation*}
We can thus set $\f'(x_i,j)=\sqrt{\lambda_{i,j}}\u_{i,j}$. An alternative decomposition $\H(x_i)=\sum_{j=1}^r \f'(x_i,j)\f'^\top(x_i,j)$ can be obtained from the pivoted Cholesky decomposition, which is also fast and numerically stable (cf. \cite{Higham1990}).

\section{Optimal designs for the continuation ratio model in clinical trials}\label{sec:examples}

In clinical trials, there are generally two conflicting objectives. One is to avoid doses that are either ineffective or cause serious adverse effects (individual ethics); designs that prioritize this objective are often called 'best intention' designs. The other objective is to gain as much information from the experiment as possible (collective ethics), which suggests using statistically efficient, or even optimal, designs.

In Phase I/II trials, safety and efficacy outcomes are investigated jointly within the same study (\cite{sverdlov}, \cite{dragalin}). This accelerates drug development by bridging the gap between the first and second phases, improves the procedure for identifying the target dose by maximizing statistical information while controlling for toxicity, and increases the sample size by combining the phases. This leads to improved precision in estimating both efficacy and toxicity, compared to conducting the phases separately.

However, besides safety and efficacy, experimenters must also consider several other objectives when designing an experiment. For example, it is necessary to ensure that the therapeutic window (the interval of doses considered both safe and efficacious) is sufficiently covered. This creates a requirement for a minimum number of support points in the design while, at the same time, keeping the doses sufficiently dispersed. On the other hand, manufacturing too many different doses may be costly, which favors keeping the number of support points reasonably small.

In this section, we will show how to incorporate these kinds of practical requirements into computing exact designs for continuation ratio model, that is a trinomial response model often used to simultaneously model efficacy and toxicity. The same approach can be used not only in other models for modeling efficacy and toxicity, such as Cox bivariate binary model (\cite{dfw}, \cite{dragalin}), or more generally, multivariate logistic model (\cite{Bu}), but in a variety of other applications that call for similar requirements. 

Consider a finite design space $\mathcal{X}=\{x_1,\ldots,x_n\}$ corresponding to the doses of a drug of interest. We can model the binary efficacy\footnote{Note that by \emph{efficacy} we mean the positive effect of the drug on the patient, in contrast to \emph{efficiency} that is used to describe the quality of the designs.} response at dose $x \in \X$ as the Bernoulli random variable
\[
Y_E(x)=\begin{cases} 1 &\mbox{if the dose } x \mbox{ is efficient} \\
	0 & \mbox{otherwise, } \end{cases} 
\] 

and binary toxicity response at dose $x \in \X$ as the Bernoulli random variable
\[
Y_T(x)=\begin{cases} 1 &\mbox{if the dose } x \mbox{ is toxic} \\
	0 & \mbox{otherwise. } \end{cases} 
\]

At dose $x$, the probability that the patient has no reaction is $p_0(x)=P[Y_E(x)=0, Y_T(x)=0]$. Similarly, the probability of efficacy without toxicity will be denoted by $p_S(x)=P[Y_E(x)=1, Y_T(x)=0]$ and the probability of toxicity by $p_T(x)=P[Y_T(x)=1]$. In a parametric efficacy-toxicity model, the distributions of $Y_E(x)$ and $Y_T(x)$ also depend on a vector $\theta$ of parameters, which results in a dependence of the probabilities $p_0$, $p_S$ and $p_T$ on $\theta$, that is, $p_0(x)=p_0(x,\theta)$, $p_S(x)=p_S(x,\theta)$,  and $p_T(x)=p_T(x,\theta)$. In this paper, we model these probabilities by the continuation-ratio (CR) model given by
\begin{equation}\label{CR}
\begin{split}
\log\frac{p_{T}(x,\theta)}{1-p_{T}(x,\theta)}&=a_1+b_1x,\\
\log\frac{p_{S}(x,\theta)}{p_{0}(x,\theta)}&=a_2+b_2x,
\end{split}
\end{equation}
where the parameter vector is $\theta=(a_1,a_2,b_1,b_2)^T$ with $a_1\in\mathbb{R},a_2\in\mathbb{R},b_1>0,b_2>0$. Taking the necessary requirement $p_{T}(x,\theta)+p_S(x,\theta)+p_0(x,\theta)=1$ into account, model \eqref{CR} implies
\begin{eqnarray*}
p_{0}(x,\theta)&=&(1+e^{a_1+b_1x})^{-1}(1+e^{a_2+b_2x})^{-1},\\
p_{S}(x,\theta)&=&e^{a_2+b_2x}(1+e^{a_1+b_1x})^{-1}(1+e^{a_2+b_2x})^{-1},\\
p_{T}(x,\theta)&=&e^{a_1+b_1x}(1+e^{a_1+b_1x})^{-1}.
\end{eqnarray*}

Approximate locally $D$- and $c$-optimal designs for this model were found by \cite{FanChaloner}. Further, \cite{CRmodel} used particle swarm optimization to compute compound approximate optimal designs for estimating the parameters of the model and the most effective dose. In \cite{Rabie}, continuation ratio model is generalized to contingent ratio model and approximate optimal designs are computed. In \cite{Bu}, the authors consider a general situation of the multinomial logistic model that also covers the continuation ratio model and compute approximate, exact\footnote{The exact designs are computed using the exchange algorithm, and only the size constraint is considered.} and Bayesian $D$-optimal designs. 
\qquad

Denote $\f_1(x) = (1,x,0,0)^\top$ and $\f_2(x) = (0,0,1,x)^\top$. According to \cite{FanChaloner} (cf. \cite{Bu}), the elementary information matrix at the point $x$ can be expressed in the form
\begin{equation}\label{CRM}
    \H(x,\theta) = \frac{e^{a_2+b_2x}}{(1+e^{a_2+b_2x})^2(1+e^{a_1+b_1x})} \f_1(x)\f_1^\top(x) + \frac{e^{a_1+b_1x}}{(1+e^{a_1+b_1x})^2}\f_2(x)\f_2^\top(x).
\end{equation}

\qquad

Following \eqref{CRM}, we can see that the information matrix in the dose $x$ for the continuation ratio model is a rank-two matrix for every $x$ and $\theta$, and can be written as a weighted sum of two rank-one matrices. For the computation of optimal designs, the standard approach in these models is the approach of locally optimal designs (see \cite{chernoff} and \cite{pronzatopazman2013}). This means that, in our case, $\H(x) = \H(x, \theta_0)$, where $\theta_0$ denotes the nominal value of the parameter $\theta$. For computing locally optimal designs in the continuation ratio model with linear and sparsity constraints, we use Theorem~\ref{thm:main}.

Although \cite{atk-bis} deals with design randomization in later phases of clinical trials, some of the constraints considered in Chapter 8, such as minimizing the expected total failure or minimizing the expected value of a loss function, are relevant in our case as well. 

\subsection{Constrained designs for efficacy and toxicity}

Consider the continuation ratio model \eqref{CR} with nominal values of the parameters $a_1=-9.5$, $a_2=-9.1$, $b_1=0.12$, $b_2=0.33$ and the dose range from 0 to 100 discretized into 101 equidistant doses, that is, $\mathcal{X}=\{0,1,2,\ldots, 99,100\}$. 
We want to compute exact $D$-optimal designs when $N=100$ patients are available. Given the approximate nature of the optimal design methodology, one should preferably suggest more designs to choose from. In this study, we considered several practical LAS-constraints (see below). Based on Theorem \ref{thm:main}, we transformed the primary multivariate-response problem into the corresponding auxiliary univariate-response problem, which we solved using the mixed-integer second-order cone programming (MISOCP) method (\cite{SagnolHarman}) implemented in the \texttt{R} package \texttt{OptimalDesign} (see \cite{rlib}).  All computations were performed on a computer with a $64$-bit Windows 11 operating system running an AMD Ryzen 7 5800H CPU processor at 3.20 GHz with 16 GB RAM. The time required for the computations ranged from $5$ to $40$ seconds across the different scenarios.

In the following subsections, we describe 6 different cases for which we compute locally optimal designs (these will be denoted as $w_0^*,w_1^*,\dots, w_5^*$). The constraints in each case are added sequentially, meaning that a new aspect is added to the previous one in each step. Note that for comparing the statistical performance of two designs $w_1$ and $w_2$, we use the $D$-efficiency $\mathrm{eff}(w_1 | w_2) = [\det(\M(w_1))/\det(\M(w_2))]^{1/m}$, which is
interpretable in terms of the relative number of trials required to reach the same criterion value. 

It should be noted that these experiments have an illustrative character and are not based on practical implementation; they are scenarios that demonstrate the theoretical results and possibilities opened up by this paper.

If the only constraint is the limit on the number of patients $N=100$, the exact locally $D$-optimal design $w_0^*$ puts patients into 6 different doses, as can be seen in Table \ref{Table:designs} and Figure \ref{Fig:designs_all_gg}. 

\subsubsection{Constraint on the expected number of failed trials}

For the design to meet safety requirements, it can be beneficial to maximize the probability of success $p_{S}$, i.e., minimize the number of failures, that is, doses that have either adverse effects and/or are ineffective. 
Following \eqref{CR}, the failure probability is
\[
p_f(x,\theta)=1-p_{S}(x,\theta)=p_0(x,\theta)+p_T(x,\theta).
\] 
In the design $w_0^*$, the expected number of failed trials is $49.35$. Assume that this value is deemed unacceptably high, and the maximum number of failed trials should be enforced to be maximum $40$. 

Now, if we consider constraints that the drug can be given to at most $N$ subjects and the expected number of failed trials is at most $u$, these constraints can be expressed in the form \eqref{eq:XiLas} by putting $a(x_i,1)=p_f(x_i,\theta)$, $b(1)=u$. Note that in the approach of local optimality, the coefficients of such constraints are also localized at a nominal parameter value $\theta_0$.

In our case, setting $N=100$ and $b(1)=40$ yields the design $w_1^*$ shown in Table \ref{Table:designs} and Figure \ref{Fig:designs_all_gg}. The relative efficiency of $w_1^*$ with respect to the basic design $w_0^*$ remains very high -- 98\%.

\subsubsection{Constraints on the total cost of the experiment with support-influenced costs}\label{subsubsec:CR_costs}

Assume that a cost $\gamma(x_i) \geq 0$ is associated with administering the dose $x_i$, $i \in \{1:n\}$; and a budget $B$ is allocated to the experiment.
This constraint can be expressed as a linear constraint $a(x_i,1) = \gamma(x_i)$ and $b(1)=B$ in Theorem \ref{thm:main}.

Suppose that, in addition to the costs associated with administering dose $x_i$ to $w(x_i)$ patients, it is also possible to face overhead costs for preparing each dose. However, these overhead costs are not charged for each patient who receives the dose but only once, to ensure the preparation of the dose $x_i$. Denote these overhead costs for preparing the dose $x_i$ as $\gamma'(x_i)$. Then, the LAS constraint can be expressed as $c(x_i,1) = \gamma'(x_i)$ in addition to the original cost constraints $a(x_i,1) = \gamma(x_i)$ with $b(1)=B$.

Assume that the cost of manufacturing and distributing a dose $x$ be $0.4x$. In addition, suppose that we must account for the cost of incorrectly dosing patients: an under-dosed patient (receiving an ineffective dose) incurs a cost of $5$, which may be associated with the need to treat the patient later with a different drug. An over-dosed patient (experiencing adverse effects) incurs a cost of $20$, reflecting the treatment required for those adverse effects. All other patients incur zero additional cost. Under these assumptions, the designs $w_0^*$ and $w_1^*$ would cost $712$ and $598$ units, respectively. If we now add a total budget constraint of $B = 500$ to the size and failure constraints, we obtain the design $w_2^*$ in Table \ref{Table:designs} and Figure \ref{Fig:designs_all_gg}, which achieves $96\%$ efficiency relative to the original design $w_0^*$.

\subsubsection{Constraint on the minimum or maximum support size}

The design $w_2^*$, while reasonably safe and cost-effective, may be considered insufficient by some practitioners, as it includes only four different doses. More generally, there may be a requirement to ensure that no fewer than $S$ different doses are tested during the clinical trial. This can be achieved by setting $a(x_i,1) = 0$, $c(x_i,1) = -1$ for $i \in \{1:n\}$, and $b(1) = -S$. Similarly, it is also possible to specify an upper bound on the number of distinct doses tested, requiring that no more than $S$ ($S<n$) different doses be included in the design. 

Imposing a lower bound of $S = 6$ on the number of different doses, in addition to the constraints previously considered, yields the design $w_3^*$ in Table \ref{Table:designs} and Figure \ref{Fig:designs_all_gg}, which has a relative efficiency of $95$\% with respect to $w_0^*$.

\subsubsection{Enforcing the space-fillingness of the design}\label{subsubsec:uniformization}

While the design $w_3^*$ technically satisfies the requirement of having support on at least $6$ points, it may fall short in terms of practical usefulness, as it includes three doses that are very close to each other. To address this problem, it would be reasonable to introduce a constraint that enforces a minimum distance $\Delta$ between any two design points included in the support, thus avoiding such clustering.

Formally, given the dose levels $x_1, x_2, \dots, x_n$, the following LAS constraints can be formulated:
\begin{eqnarray}
    c(x_{1}, 1) &=& c(x_{2}, 1) = \dots = c(x_{\Delta}, 1) = 1  \label{eq:unif1} \\
    c(x_{2}, 2) &=& c(x_{3}, 2) = \dots = c(x_{(\Delta+1)}, 2) = 1  \label{eq:unif12} \\ 
    &\vdots&  \nonumber \\
    c(x_{(n-\Delta+1)}, n-\Delta+1) &=& \dots = c(x_{n}, n - \Delta+1) = 1  \label{eq:unif3}
\end{eqnarray}

Applying this constraint with the minimum distance $\Delta = 10$ between any two design points yields the design $w_4^*$ (see Table \ref{Table:designs}), which has a relative efficiency of $94\%$ with respect to $w_0^*$.

\begin{figure}
\centering
\includegraphics[width=0.9\linewidth]{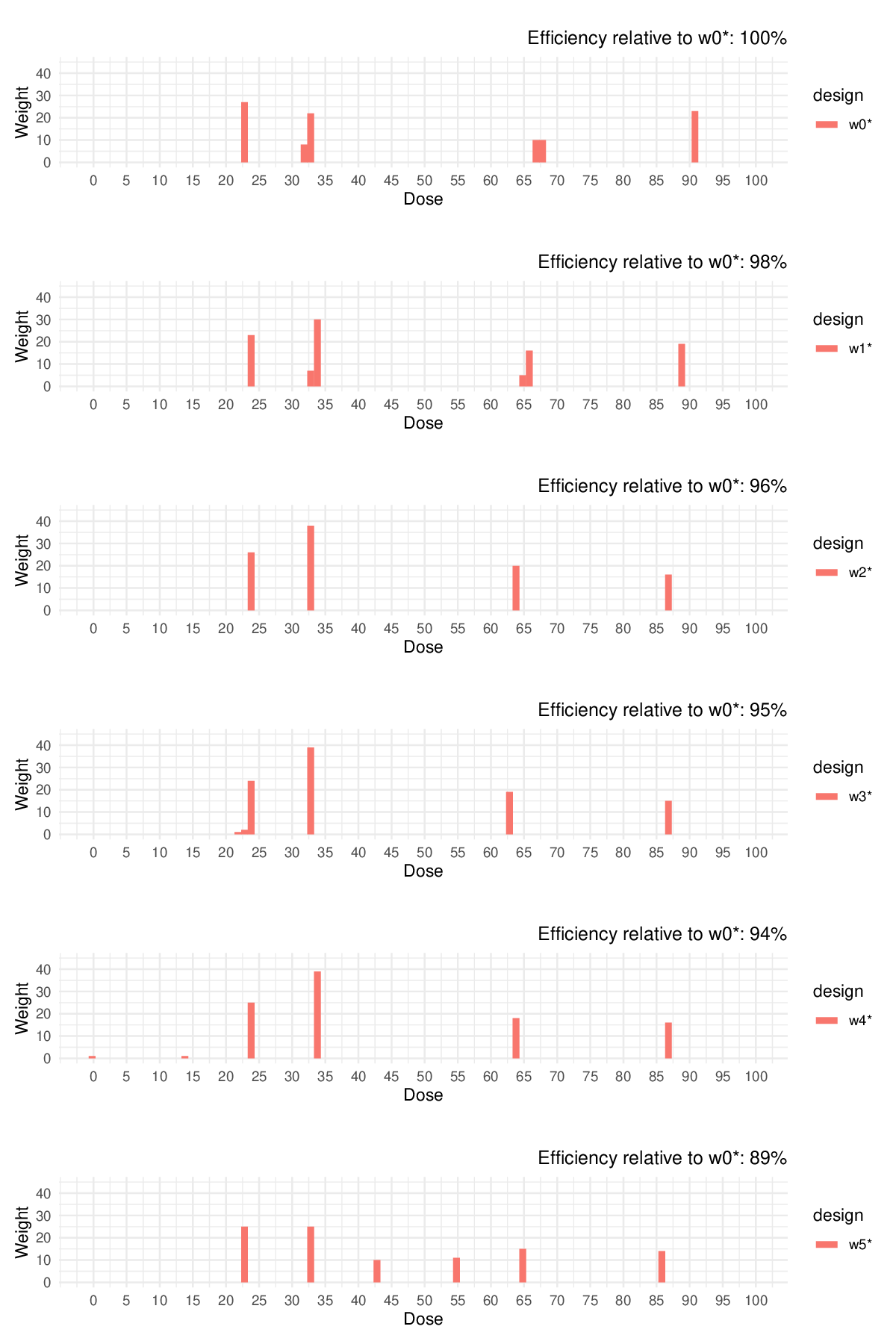}\caption{$D$-optimal designs for model \eqref{CR}. The weight is the integer replication number of each design point, i.e., the number of patients assigned to each dose. The sequence of designs corresponds to sequential adding of practical constraints. Note that the final design $w_5^*$ is $30\%$ less expensive, leads to $12\%$ less failed trials and has a space-filling character with the required numbers of replications at support points; yet its statistical efficiency is only marginally lower than that of the original unconstrained design $w_0^*$. For details, see also Table \ref{Table:designs}.} \label{Fig:designs_all_gg}
\end{figure}

\subsubsection{Constraints on the limits of replications at support points}\label{subsubsec:limits_of_rep}

Another common type of constraint encountered in practice involves bounding the replication numbers by $w(x_i) \in [L(x_i), U(x_i)]$ for all $i \in \{1:n\}$. This can be interpreted as requiring that each dose $x_i$ be administered to at least $L(x_i)$ and at most $U(x_i)$ patients. These conditions correspond to the linear constraints with $a(x_k,k) = -1$, $b(k) = L(x_k)$ for $k \in \{1:n\}$, and $a(x_{k-n},k) = 1$, $b(k-n) = U(x_{k-n})$ for $k \in \{(n+1):2n\}$.

Note that for every $L(x_i) > 0$, the dose $x_i$ must be prepared, regardless of whether it contributes meaningfully to the design. A more practical alternative is to require that, \emph{if} a dose $x_i$ is prepared (i.e., included in the support of the design), it must be administered to at least $L(x_i)$ patients, while still maintaining an upper bound of $U(x_i)$. More formally, this means $w(x_i) = 0$ or $w(x_i) \in \{L(x_i):U(x_i)\}$. These bounded replication conditions at the support points can be encoded using the sparsity constraints
\[
w(x_i) \geq L(x_i) \, s_w(x_i) \quad \text{and} \quad w(x_i) \leq U(x_i) \, s_w(x_i), \quad \text{for } i \in \{1:n\},
\]
which correspond to the following LAS formulation: $a(x_k,k)=-1$, $c(x_k,k)=L(x_k)$ for $k \in \{1:n\}$ and $a(x_{k-n},k)=1$, $c(x_{k-n},k)=-U(x_{k-n})$ for $k \in \{(n+1):2n\}$ from Theorem \ref{thm:main}, with all other $a(x_i,k)$ and $c(x_i,k)$ equal to zero.

Upon examining the design $w_4^*$, we see that this type of constraint is reasonable to apply. Specifically, avoiding the manufacture of doses 0 and 14 may be preferable if they are to be administered to only one patient each. Simultaneously, the relatively high proportion of patients assigned to dose $34$ may be a concern. 

To address these issues, we impose a lower bound $L = 10$ and an upper bound $U = 25$ on the number of patients receiving each dose used in the design. This results in the design $w_5^*$ shown in Table~\ref{Table:designs}.

Note that this design satisfies several practical requirements: in addition to having a total size of $N = 100$, it respects the budget constraint $B = 500$, limits the expected number of failures to at most $30$, ensures a reasonable spread across doses, and avoids extreme imbalance in dose allocation. Despite these practical constraints, the design remains statistically robust, achieving $89\%$ relative efficiency with respect to the original locally $D$-optimal design $w_0^*$ considering only the size constraint $N=100$.

\begin{table}[ht!]
\centering
\begin{tabular}{|c|ccccccc||c|c|c|c|}
\hline
&& design &&&&&& $\Phi$ & eff & E(fail) & cost\\
\hline
$w_0^*$ & x & 23 & 32  & 33  & 67  & 68 &  91 & 60.11 & 1 & 49.35 & 711.80\\
 & w & 27  &  8  & 22 &  10 &  10 &  23 & & & & \\
 \hline
 $w_1^*$ & x & 24 & 33 & 34 & 65 & 66 & 89 & 58.75 & 0.98 & 39.99 & 597.83\\
 & w & 23 & 7 & 30 & 5 & 16 & 19 & & & & \\
 \hline
 $w_2^*$ & x & 24 & 33 & 64 & 87 & & & 57.94 & 0.96 & 39.76 & 499.14\\
 & w & 26 & 38 & 20 & 16 & & & &&&\\
 \hline
 $w_3^*$ & x & 22 & 23 & 24 & 33 & 63 & 87 & 57.46 & 0.95 & 39.75 & 499.99\\
 & w & 1 & 2 & 24 & 39 & 19 & 15 & & & & \\
 \hline
$w_4^*$ & x & 0 & 14 & 24 & 34 & 64 & 87 & 56.75 & 0.94 & 39.47 & 499.86\\
 & w & 1 & 1 & 25 & 39 & 18 & 16 & & & & \\
 \hline
 $w_5^*$ & x & 23 & 33 & 43 & 55 & 65 & 86 & 53.45 & 0.89 & 36.94 & 499.70\\
 & w & 25 & 25 & 10 & 11 & 15 & 14 & & & & \\
 \hline
\end{tabular}
\caption{$D$-optimal designs for the model \eqref{CR} with sequentially appended constraints: size constraint $N=100$ ($w_0^*$, the first panel in Figure \ref{Fig:designs_all_gg}), expected number of failed trials at most 40 ($w_1^*$, the second panel in Figure \ref{Fig:designs_all_gg}), total cost at most $500$ ($w_2^*$, the third panel in Figure \ref{Fig:designs_all_gg}), minimum support size $6$ ($w_3^*$, the fourth panel in Figure \ref{Fig:designs_all_gg}), space between the doses at least $10$ ($w_4^*$, the fifth panel in Figure \ref{Fig:designs_all_gg}), number of replications between $10$ and $25$ ($w_5^*$, the last panel in Figure \ref{Fig:designs_all_gg}). See the text for further details on the constraints.}\label{Table:designs}
\end{table}

\section*{Discussion}

The computational approach introduced in this paper broadens the applicability of mathematical programming techniques originally developed for computing exact optimal designs in univariate-response regression models with standard linear constraints. By introducing a straightforward transformation strategy, we enable these methods to handle exact optimal design problems for multivariate-response models subject to both general linear and, most importantly, novel sparsity constraints. These sparsity constraints include, for instance, bounds on the number of distinct experimental conditions, support-influenced costs, or lower and upper replication limits at design points.

While our demonstrations focused on the D-optimality criterion and employed MISOCP solvers, the method is general and can accommodate any criterion for which a suitable mathematical programming formulation is (or will be) available. The flexibility and generality of this framework make it relevant to a broad range of applications, including dose-finding studies in clinical trials, as demonstrated in our numerical example.

Future methodological and algorithmic research directions include extending this approach to Bayesian, adaptive, or sequential design settings. It would also be valuable to demonstrate the utility of the proposed approach in other important application areas, such as screening experiments, which often require finer control over replications, and product lifetime testing, which naturally leads to multivariate and strongly correlated observations.

\bibliographystyle{plain}
\bibliography{DOSE}

\end{document}